
\documentclass[letterpaper, 10 pt, conference, twocolumn]{IEEEtran}  

\IEEEoverridecommandlockouts                              

\overrideIEEEmargins                                      


\usepackage{graphicx} %
\usepackage{amsmath} 
\usepackage{amssymb}  
\usepackage{eqnarray}  %
\usepackage{latexsym}  %

\title{\LARGE \bf
Tracking Control for FES-Cycling 
based on Force Direction Efficiency with 
Antagonistic Bi-Articular Muscles 
}

\author{Hiroyuki Kawai$^{1}$, Matthew J. Bellman$^{2}$, 
Ryan J. Downey$^{2}$ and Warren E. Dixon$^{2}$
\thanks{This research is supported in part by the Telecommunications 
Advancement Foundation, and NSF award number 1161260. 
Any opinions, findings and conclusions or recommendations expressed in 
this material are those of the authors and do not necessarily reflect 
the views of the sponsoring agency.}
\thanks{$^{1}$Department of Robotics, Kanazawa Institute of Technology, 
Ishikawa, 921-8151, Japan
        {\tt\small hiroyuki@neptune.kanazawa-it.ac.jp}}%
\thanks{$^{2}$Department of Mechanical and Aerospace Engineering, University of Florida, 
Gainesville, FL 32611-6250, USA
}%
}

\newcommand{\R}{{\cal R}}

\newcommand{\sgn}{\mathop{\mathrm{sgn}}\nolimits}

\newtheorem{theorem}{Theorem}
\newtheorem{assumption}{Assumption}

\begin{document}

\maketitle
\thispagestyle{empty}
\pagestyle{empty}

\begin{abstract}
A functional electrical stimulation (FES)-based 
tracking controller is developed to enable cycling 
based on a strategy to yield force direction efficiency 
by exploiting antagonistic bi-articular muscles. 
Given the input redundancy naturally occurring among 
multiple muscle groups, 
the force direction at the pedal is explicitly determined 
as a means to improve the efficiency of cycling. 
A model of a stationary cycle and rider is developed 
as a closed-chain mechanism. 
A strategy is then developed to switch between muscle 
groups for improved efficiency 
based on the force direction of each muscle group. 
Stability of the developed controller is analyzed through 
Lyapunov-based methods. 
\end{abstract}

\section{INTRODUCTION}

In the human body, coordinated firing of motor neurons activates 
skeletal muscles which generate torques about the body's joints, and 
thereby, produce complex motions. Neurological disorders that damage 
the motor neurons can lead to paresis or paralysis and impaired motion. 
Specifically, people suffering from upper motor neuron 
disorders like stroke and spinal cord 
injury have difficulty performing functional motions like standing, walking, 
or cycling. 
Functional electrical stimulation (FES) seeks to augment
lost motor neuron function through artificially applied electrical 
currents to recover some 
functional motion (e.g., walking \cite{TBE12NeEr}, 
standing \cite{TNSRE05KaShMuBa}, grasping and releasing 
\cite{TNSRE12WeScVeKo}, etc.).  

Cycling induced by FES has been reported as physiologically and 
psychologically beneficial for people suffering from disorders 
affecting the muscles of the lower limbs \cite{JMBE11PeChLaChChMiHa}; 
however, FES-cycling is metabolically inefficient and produces 
less power output than able-bodied cycling \cite{THC12HuFaSaGrLa}. 
Previous studies have used various design and control techniques 
to address these shortcomings. 
Chen $et\ al.$ \cite{TRE97ChYuHuAnCh} used a model-free fuzzy 
logic controller for FES-cycling.  
Gf${\rm \ddot{o}}$ohler and Lugner \cite{TRE00GfLu} considered 
an optimized stimulation pattern of leg muscles by FES. 
In \cite{TNSRE04GfLu}, the influence of a number of individual parameters
on the optimal stimulation pattern and power output during
FES-cycling was investigated.  
Eser $et\ al.$ \cite{TNSRE03EsDoKnSt} examined 
the relation between stimulation frequency and power output for cycling 
by trained SCI patients. 
Kim $et\ al.$ \cite{SMC08KiEoHaKhTaYiJu} proposed a feedback
control system for FES-cycling, focusing on automatically determining 
stimulation patterns for multiple muscle groups. 

The aforementioned results provide useful methods for FES-cycling 
from a practical perspective,  
though explicit analysis of FES-cycling control from a theoretical 
point of view has been limited to linear approximations of the 
nonlinear cycle-rider system. 
Some recent studies \cite{TNSRE09ShStGrDi}--\cite{CDC13DoChDi} have 
focused on the development of RISE-based FES controllers and the 
associated analytical stability analysis for tracking of a human 
knee joint in the presence of a nonlinear uncertain muscle model 
with non-vanishing additive disturbances. However, these previous 
works have only considered knee joint dynamics.

In this paper, we consider tracking control for FES-cycling 
based on force direction efficiency derived from using 
antagonistic bi-articular muscles. 
Antagonistic bi-articular muscles, which pass over two adjacent joints 
and therefore act on the both joints simultaneously, are considered 
as one of the most important mechanisms of the human body associated 
with motion \cite{JDSMC85Ho}. 
Based on the antagonistic bi-articular muscle model, a stimulation 
pattern is derived for the gluteal, quadriceps femoris, hamstrings, 
and gastrocnemius muscle groups which aims to improve efficiency by 
maintaining a pedal force that is tangent to the pedal path. 
Considering the bi-articular muscle effects and controlling the pedal 
force direction may prove to increase FES-cycling power output and 
efficiency. A RISE-based controller and associated stability analysis 
are developed for an uncertain nonlinear cycle-rider system in the 
presence of an unknown time-varying disturbance, and semi-global 
asymptotic tracking of the desired trajectories is guaranteed 
provided sufficient control gain conditions are satisfied.

\section{BICYCLE MODEL}
A stationary cycle and rider can be modeled as a closed-chain 
mechanism~\cite{TCST00GhChGuLo}. 
Consider a three degree-of-freedom (DOF) holonomic mechanical multibody 
system $\Sigma'$ as shown in Fig.~\ref{fig:BicycleModel2}, which 
consists of a collection of rigid bodies described as 
\begin{eqnarray}
\Sigma':\ M'(q')\ddot{q}'+C'(q',\dot{q}')\dot{q}'+g'(q')=0, 
\label{eqn:Sigma'}
\end{eqnarray}
where $q'=[q_{1}\ q_{2}\ q_{3}]^{T} \in \R^{3}$ 
represents the hip, knee, and crank angles, respectively, 
$M'(q')\in \R^{3\times 3}$ is the inertia matrix, 
$C'(q',\dot{q}')\dot{q}'\in \R^{3}$ represents the 
centrifugal and Coriolis terms, and $g'(q')\in \R^{3}$ is the 
gravity term.

\begin{figure}[t]
 \begin{center}
  \scalebox{.48}{\includegraphics{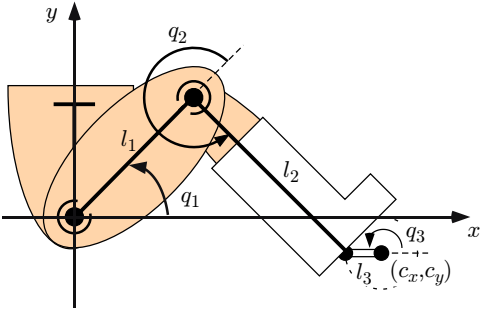}}
  \caption{Bicycle Model.}
  \label{fig:BicycleModel2}
 \end{center}
\end{figure}
From Fig.~\ref{fig:BicycleModel2}, the scleronomic holonomic constraints 
are given by 
\begin{eqnarray}
{\cal C}:\ \phi(q')=
\left[\begin{array}{c}
l_{1}C_{1}+l_{2}C_{12}-l_{3}C_{3}-c_{x}\\
l_{1}S_{1}+l_{2}S_{12}-l_{3}S_{3}-c_{y}
\end{array}\right]
=0
\label{eqn:constraints}
\end{eqnarray}
where $l_{i}$ $(i=1,2,3)$ are the lengths of the thigh, shank and crank, 
$c_{x}$ and $c_{y}$ are the coordinates of the 
center of the crank,  
$S_{ij}$ is defined as $S_{ij}:=\sin(q_{i}+q_{j})$, and 
$C_{ij}$ is defined as $C_{ij}:=\cos(q_{i}+q_{j})$. 
\begin{assumption}\label{ass:1}
From (\ref{eqn:constraints}) and the physical relationships 
associated with the seated cyclist, 
the hip and knee angles are constrained to the regions 
$\pi<q_{2}<2\pi$ and $\pi<q_{1}+q_{2}<2\pi$. 
\end{assumption}

In the subsequent development, only the crank angle $q_{3}$ is 
assumed to be measurable.  
Hence, a parameterization for the generalized coordinates $q$ 
is developed as
\begin{eqnarray}
q'\longmapsto q=\alpha(q')=\left[\begin{array}{ccc}
0&0&1
\end{array}\right]q'.\label{eqn:parameterization}
\end{eqnarray}
From Theorem 1 in \cite{TCST00GhChGuLo}, the equation of motion of 
the constrained system expressed in terms of the independent generalized 
coordinate $q$ is obtained by combining
\begin{eqnarray}
\left\{\begin{array}{l}
M(q')\ddot{q}+C(q',\dot{q}')\dot{q}+g(q')=0 \cr
\dot{q}'=\mu(q')\dot{q} \cr
q'=\sigma(q)
\end{array}\right. 
\end{eqnarray}
to yield
\begin{eqnarray}
 \Sigma:\ M(q)\ddot{q}+C(q,\dot{q})\dot{q}+g(q)=\tau \label{eqn:Sigma}
\end{eqnarray}
where $\tau\in \R$ is the torque about the crank,  
$\mu(q')$ is expressed by using the constraints in (\ref{eqn:constraints})
and the parameterization in (\ref{eqn:parameterization}), and 
$\sigma(q)$ can be derived by solving the constraints ${\cal C}$ in
(\ref{eqn:constraints}). 
Detailed expressions for $\mu(q')$ and $\sigma(q)$ are in 
Appendix~A.

\section{INPUT FORCE}
The human thigh model can be divided into three pairs of antagonistic
muscles as depicted in Fig.~\ref{fig:human_leg}, where 
two groups consist of antagonistic mono-articular muscles 
and one group consists of antagonistic bi-articular muscles.  
The antagonistic mono-articular muscles that span the hip joint consist of 
three extensor muscles denoted by $e_{1}$ and two flexor muscles 
denoted by $f_{1}$. 
The antagonistic mono-articular muscles that span the knee joint consist 
of a flexor muscle denoted by $f_{2}$ and three extensor muscles 
denoted by $e_{2}$. 
Antagonistic bi-articular muscles span both the hip 
and the knee joint and consist of $fe_{3}$ and $ef_{3}$  
where $fe_{3}$ flexes the hip and extends the knee, 
and $ef_{3}$ extends the hip and flexes the knee.  
\begin{figure}[t]
 \begin{center}
  \scalebox{.48}{\includegraphics{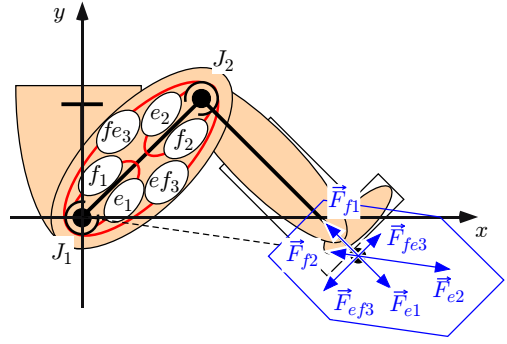}}
  \caption{Human thigh model. 
(i) Antagonistic mono-articular muscles 
spanning the hip joint consist of three extensor muscles $e_{1}$, 
i.e.,~gluteus maximus, gluteus medius and gluteus minimus, 
and two flexor muscles $f_{1}$, i.e.,~psoas major and iliacus.
(ii) Antagonistic mono-articular muscles spanning the knee joint 
consist of biceps femoris short head $f_{2}$ and three extensor 
muscles $e_{2}$, 
i.e.,~vastus intermedius, vastus lateralis and vastus medialis. 
(iii) Antagonistic bi-articular muscles spanning both the hip 
and the knee joint consist of rectus femoris $fe_{3}$ and  
three muscles $ef_{3}$, 
i.e.,~biceps femoris long head, semimembranosus and semitendinosus.  
$fe_{3}$ flexes the hip and extends the knee, 
and $ef_{3}$ extends the hip and flexes the knee.  
}
  \label{fig:human_leg}
 \end{center}
\end{figure}

The resulting force at the pedal depends on 
the combination of the active muscle forces.  
Moreover, as shown in Fig.~\ref{fig:human_leg}, 
the directions of $\vec{F}_{f1}$ and $\vec{F}_{e1}$ coincide 
with the direction of the shank, 
the direction of $\vec{F}_{f2}$ and $\vec{F}_{e2}$ pass through the hip 
joint $J_{1}$ and the pedal, 
and the directions of $\vec{F}_{fe3}$ and $\vec{F}_{ef3}$ are parallel to 
the thigh. 
\begin{figure}[t]
 \begin{center}
  \scalebox{.48}{\includegraphics{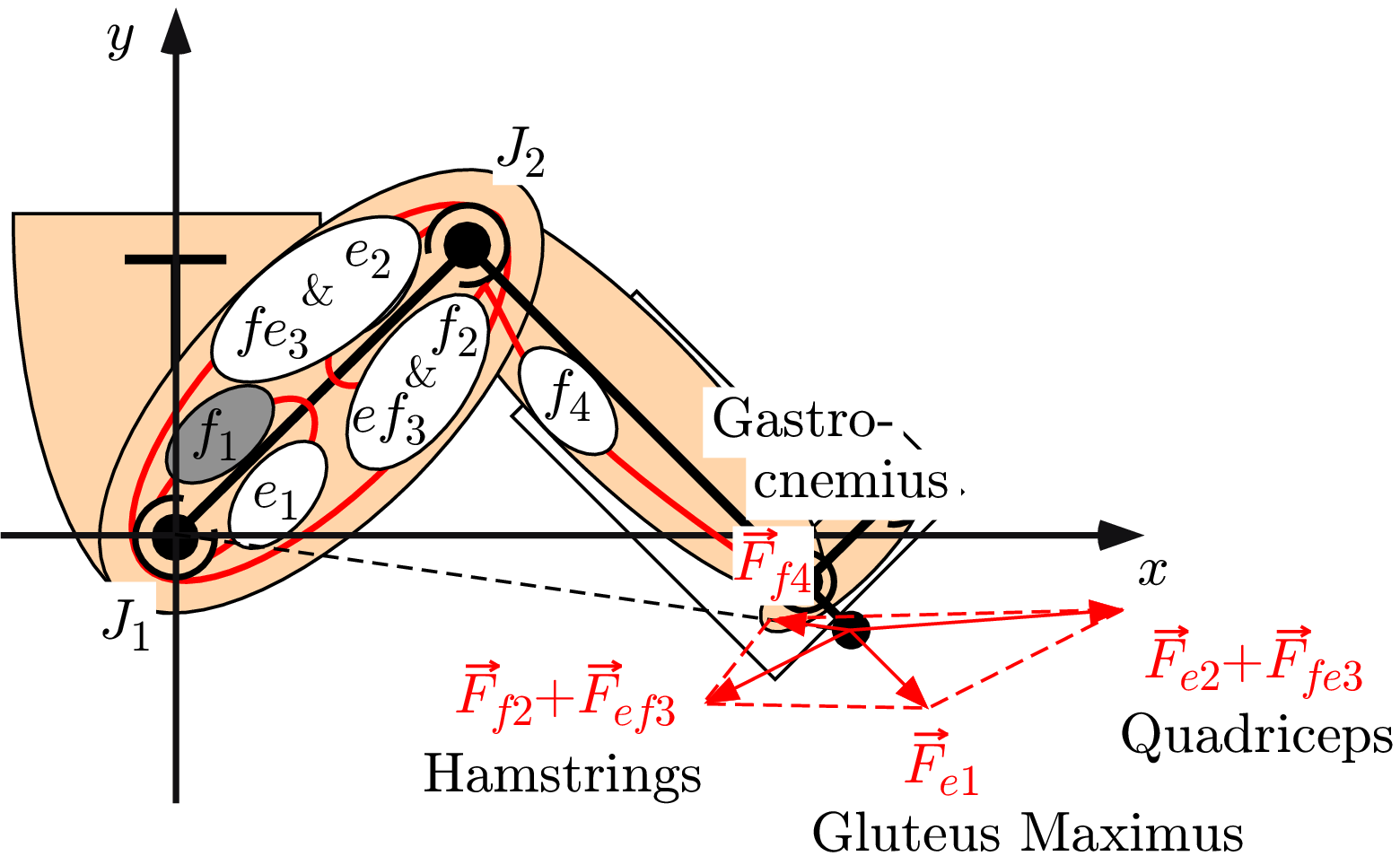}}
  \caption{
The gastrocnemius is a bi-articular muscle group which 
which is a flexor for the knee joint and 
is denoted as $f_{4}$.  
}
  \label{fig:human_leg+f4}
 \end{center}
\vspace{2ex}
 \begin{center}
  \scalebox{.58}{\includegraphics{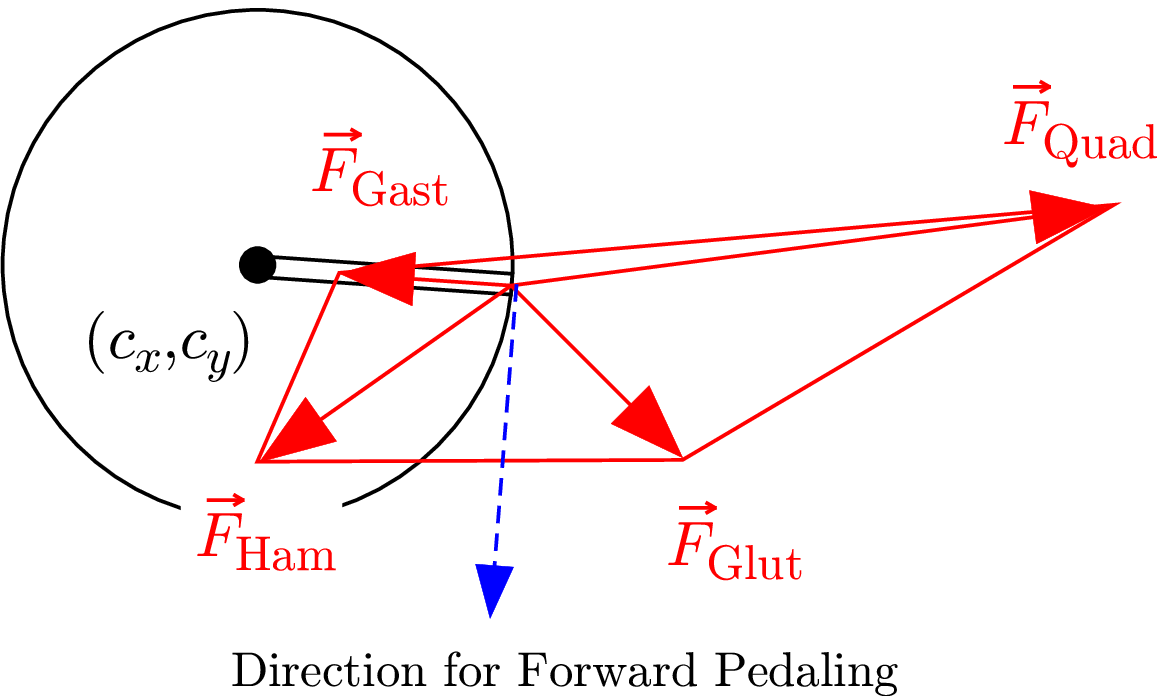}}
  \caption{To improve pedaling efficiency, the force direction
at the pedal is selected to be tangent to the pedal path by altering 
the relative activation of muscle groups.}
  \label{fig:force_direction}
 \end{center}
\end{figure}
The torque produced at the joint(s) the muscle spans is defined as
\begin{eqnarray}
\tau_{i}&:=&\Omega_{i}u_{i},\ \ 
\Omega_{i}:=\zeta_{i}\eta_{i}\cos(a_{i}),\label{eqn:each_force}\\
&&i\in {\cal T},\ \ {\cal T}:=\{e_{1}, f_{1}, e_{2}, f_{2}, 
ef_{3}, fe_{3}\},\nonumber
\end{eqnarray}
where $\zeta_{i}\in\R$  denotes a positive moment arm that changes
with the crank angle \cite{JBio04KrPaPe}, \cite{RehEng97Buetal},  
$a_{i}\in\R$ is defined as the pennation angle between the 
tendon and the muscle which changes with the crank angle 
\cite{TNSRE09ShStGrDi}, 
$\eta_{i}\in\R$ is an unknown function that relates the applied 
voltage to muscle fiber force which changes with the crank angle 
and velocity, and $u_{i}\in\R$ is the control voltage input applied 
across each muscle group.
\begin{assumption}\label{ass:2}
The moment arm $\zeta_{i}$ is assumed to be a positive, bounded, 
second order differentiable function such that 
its first and second time derivatives are bounded 
if $q^{k}\in{\cal L}_{\infty}$, where $q^{k}$ denotes the $k$th
time derivative of $q$ for $k=0$, $1$, $2$ \cite{JBio04KrPaPe}. 
Similarly, the function $\eta_{i}$ is assumed to be a positive, bounded, 
second order differentiable function such that 
its second time derivative is bounded if $q^{k}\in{\cal L}_{\infty}$ 
for $k=0$, $1$, $2$, $3$ \cite{RehEng99WaFuHo}. 
\end{assumption}
\begin{assumption}\label{ass:3}
For each bi-articular muscle, the torque acting on each of 
the two joints is assumed to be equal. 
\end{assumption}
The forces at the pedal $F=[F_{x}\ F_{y}]^{T}$ 
are related to the joint torque $T=[T_{1}\ T_{2}]^{T}$ as 
\begin{eqnarray}
 F=(J^{T})^{-1}T\label{eqn:FT}
\end{eqnarray}
where $J$ is the Jacobian matrix\footnote{
$\det(J^{T})=l_{1}l_{2}S_{2}\neq 0$ except for $q_{2}=n\pi,\ n\in {\cal Z}$. 
Thus, $J^{T}$ is invertible under Assumption~\ref{ass:1}.}
defined as
\begin{eqnarray}
J:=\left[\begin{array}{cc}
-l_{1}S_{1}-l_{2}S_{12}&-l_{2}S_{12}\cr
l_{1}C_{1}+l_{2}C_{12}&l_{2}C_{12}\cr
\end{array}\right]. \label{eqn:Jacobian}
\end{eqnarray}
Moreover, the joint torques can be represented as 
\begin{eqnarray}
 T_{1}&=&(\tau_{f1}-\tau_{e1})+(\tau_{fe3}-\tau_{ef3}),\label{eqn:T1}\\
 T_{2}&=&(\tau_{e2}-\tau_{f2})+(\tau_{fe3}-\tau_{ef3}).\label{eqn:T2}
\end{eqnarray}
Using (\ref{eqn:FT})--(\ref{eqn:T2}), 
the force at the pedal can be expressed as follows
\cite{IGAKU08NaKu}:   
\begin{eqnarray}
|F_{i}|&=&\sqrt{F_{i_{x}}^{2}+F_{i_{y}}^{2}}
=R_{i}\tau_{i},\\ 
\theta_{i}&=&\tan^{-1}\left(
\dfrac{F_{i_{y}}}{F_{i_{x}}}
\right),
\ \ i\in {\cal T}, 
\end{eqnarray}
where 
\begin{eqnarray}
R_{f1}&=&\left|\frac{1}{l_{1}S_{2}}\right|,\  \theta_{f1}=q_{1}+q_{2}-\pi, \\
R_{e1}&=&\left|\frac{1}{l_{1}S_{2}}\right|,\  \theta_{e1}=q_{1}+q_{2}, 
%
\end{eqnarray}
\begin{eqnarray}
R_{f2}&=&\left|\frac{1}{l_{2}S_{0}}\right|,\  
\theta_{f2}=\tan^{-1}\left(\dfrac{1_{1}S_{1}+l_{2}S_{12}}
{1_{1}C_{1}+l_{2}C_{12}}
\right)-\pi,\hspace*{3ex} \label{eqn:Rf2}
\\
R_{e2}&=&\left|\frac{1}{l_{2}S_{0}}\right|,\  
\theta_{e2}=\tan^{-1}\left(\dfrac{1_{1}S_{1}+l_{2}S_{12}}
{1_{1}C_{1}+l_{2}C_{12}}
\right), \label{eqn:Re2} 
\\
%
R_{fe3}&=&\left|\frac{1}{l_{2}S_{2}}\right|,\  
\theta_{fe3}=q_{1},\\
R_{ef3}&=&\left|\frac{1}{l_{2}S_{2}}\right|,\  
\theta_{ef3}=q_{1}-\pi, 
\end{eqnarray}
and $\theta_{i}$ is the direction of the 
force at the pedal. 
Note that (\ref{eqn:Rf2}) and (\ref{eqn:Re2}) make 
use of the geometric relation 
$S_{0}=-{l_{1}S_{2}}/{\sqrt{l_{1}^2+l_{2}^{2}+2l_{1}l_{2}C_{2}}}$ 
where $q_{0}:=2\pi-(q_{1}+q_{2})+\tan^{-1}((l_{1}S_{1}+l_{2}S_{12})
/(l_{1}C_{1}+l_{2}C_{12}))$. 

While healthy individuals may be able to activate individual 
muscles during voluntary contractions, it is difficult to
selectively activate individual muscles during external FES 
with transcutaneous electrodes if the muscles are in close 
proximity to each other (e.g., the pair of vastus intermedius, 
vastus lateralis and vastus medialis $e_{2}$ 
and rectus femoris $fe_{3}$,  
the pair of biceps femoris short head $f_{2}$ and 
biceps femoris long head, semimembranosus and semitendinosus   
$fe_{3}$.). 
Moreover, deep muscles (e.g., psoas major and iliacus $f_{1}$) 
cannot be activated by transcutaneous stimulation without also
activating the superficial muscles. 
Therefore, we consider the quadriceps femoris muscle group 
which contains $e_{2}$ and $fe_{3}$, and 
the hamstrings muscle group which contains 
$f_{2}$ and $ef_{3}$ as shown in Fig.~\ref{fig:human_leg+f4}. 
Additionally, the gastrocnemius, $f_{4}$, is a flexor muscle for the 
knee joint and is used to modify the direction of force.  
Hereafter, we consider the following four muscle groups: 
Gluteus Maximus, Hamstrings, Gastrocnemius, and Quadriceps. 

The forces acting at the pedal for each muscle group 
are expressed as 
\begin{eqnarray}
 \vec{F}_{\rm Glut}&=&\vec{F}_{e1},\\
 \vec{F}_{\rm Ham}&=&\vec{F}_{f2}+\vec{F}_{ef3},\\
 \vec{F}_{\rm Gast}&=&\vec{F}_{f4},\\
 \vec{F}_{\rm Quad}&=&\vec{F}_{e2}+\vec{F}_{fe3}, 
\end{eqnarray}
where $\vec{F}_{f4}$ is similar to $\vec{F}_{f2}$ 
(i.e., $R_{f4}=R_{f2}$ and $\theta_{f4}=\theta_{f2}$). 
The crank torque can be expressed in terms of the muscle forces as 
\begin{eqnarray}
\tau&=&(\vec{F}_{\rm Glut}+\vec{F}_{\rm Ham}+\vec{F}_{\rm Gast}
+\vec{F}_{\rm Quad})\times \vec{l}_{3}
\nonumber\\
&&-d+M_{e}(q)+M_{v}(\dot{q})\label{eqn:tau}
\end{eqnarray}
where 
$M_{e}(q)\in\R$ and $M_{v}(\dot{q})\in\R$ are elastic \cite{TRE00FePe} 
and viscous moments \cite{CEP05ScNePrHuFrFeRa} 
defined as
\begin{eqnarray}
M_{e}(q)
&:=&\mu(q')^{T}
\left[\begin{array}{c}
-k_{11}e^{-k_{12}q_{1}}(q_{1}-k_{13})\cr
-k_{21}e^{-k_{22}q_{2}}(q_{2}-k_{23})\cr
0\cr
\end{array}\right]\\
M_{v}(\dot{q})
&:=&\mu(q')^{T}
\left[\begin{array}{c}
b_{11}\tanh(-b_{12}\dot{q}_{1})-b_{13}\dot{q}_{1}\cr
b_{21}\tanh(-b_{22}\dot{q}_{2})-b_{23}\dot{q}_{2}\cr
0\cr
\end{array}\right],  
\end{eqnarray}
where $k_{11}$, $\cdots$, $k_{23}\in\R$ and 
$b_{11}$, $\cdots$, $b_{23}\in\R$ are unknown constants, 
$\vec{l}_{3}$ is defined as
\begin{eqnarray}
\vec{l}_{3}=l_{3}\left[\begin{array}{c}
C_{3}\cr
S_{3}
\end{array}\right], 
\end{eqnarray}
and $d$ is an unknown bounded disturbance from unmodeled dynamics. 
Combining (\ref{eqn:Sigma}) and (\ref{eqn:tau}) yields 
\begin{eqnarray}
\hspace*{-2ex}
M(q)\ddot{q}&+&C(q,\dot{q})\dot{q}+g(q) \nonumber\\
&=&\left(\sum_{i\in {\cal S}}\vec{\Omega}_{i}u_{i}\times \vec{l}_{3}\right)
-d+M_{e}(q)+M_{v}(\dot{q})\label{eqn:OpenDyn1}
\end{eqnarray}
where ${\cal S}=\{\rm Glut, Ham, Gast, Quad\}$ and 
\begin{eqnarray}
\vec{\Omega}_{\rm Glut}&:=&R_{e1}\Omega_{e1}
\left[\begin{array}{c}
C_{12}\cr
S_{12}\cr
\end{array}\right]\label{eqn:Omega_Glut}\\
\vec{\Omega}_{\rm Ham}&:=&
R_{f2}\Omega_{f2}
\left[\begin{array}{c}
C_{\theta_{f2}}\cr
S_{\theta_{f2}}\cr
\end{array}\right]
-R_{ef3}\Omega_{ef3}
\left[\begin{array}{c}
C_{1}\cr
S_{1}\cr
\end{array}\right]\label{eqn:Omega_Ham}\\
\vec{\Omega}_{\rm Gast}&:=&R_{f4}\Omega_{f4}
\left[\begin{array}{c}
C_{\theta_{f2}}\cr
S_{\theta_{f2}}\cr
\end{array}\right]\label{eqn:Omega_Gast}\\
\vec{\Omega}_{\rm Quad}&:=&
R_{e2}\Omega_{e2}
\left[\begin{array}{c}
C_{\theta_{e2}}\cr
S_{\theta_{e2}}\cr
\end{array}\right]
+R_{fe3}\Omega_{fe3}
\left[\begin{array}{c}
C_{1}\cr
S_{1}\cr
\end{array}\right]. \label{eqn:Omega_Quad}
\end{eqnarray}

Given the natural muscle redundancy, a transformation is 
developed as 
\begin{eqnarray}
u_{i}=\chi_{i}u,\ \ i\in {\cal S} \label{eqn:u_i}
\end{eqnarray}
where $u\in\R$ is the control input, 
and $\chi_{i}\in [0,1]$ is the designed 
activation ratio used to control force direction. 
The position of the pedal exists inside of the quadrilateral which is
constructed by the force directions of the four muscle groups as 
shown in Fig.~\ref{fig:force_direction}, and thus, the resulting 
force can be selected to be in any direction by altering the 
relative activation of the muscle groups. 
Because there exists infinitely many combinations by which three 
or more muscle groups can result in the same desired force 
direction, only two muscle groups are activated at any 
given time in this approach. 
The designed activation ratios are selected to satisfy 
the following relationships:
\begin{eqnarray}
\chi_{i}+\chi_{j}=1,\ \chi_{k}=0,\ \chi_{l}=0,\  \sin\theta = 1
\label{eqn:constraint_chi}
\end{eqnarray} 
where  $(i,j)\in\{({\rm Glut, Ham}),({\rm Ham, Gast}), ({\rm Gast, Quad)}$, 
$({\rm Quad, Glut})\}$ and $(k,l)\in {\cal S}\neq i,j$, and   
$\theta$ is the angle between the direction of the combination of the 
muscle forces  $\sum_{i\in {\cal S}}\chi_{i}\vec{\Omega}_{i}$ 
and the crank $\vec{l}_{3}$. 
The constraint on $\theta$ in (\ref{eqn:constraint_chi}) is designed 
such that the resulting combination of muscle forces is tangent to 
the crank. In other words, $\chi_{i}$ is designed to improve the 
efficiency of the cycling by ensuring that the resulting combination 
of muscle forces only contributes to the forward movement of the crank. 
By using (\ref{eqn:u_i}), (\ref{eqn:OpenDyn1}) can be expressed as 
\begin{eqnarray}
M(q)\ddot{q}&+&C(q,\dot{q})\dot{q}+g(q)\nonumber\\
&&-M_{e}(q)-M_{v}(\dot{q})+d
=\Omega_{\chi}u 
\label{eqn:OpenDyn2}
\end{eqnarray}
where $\Omega_{\chi}=\left\|
\sum_{i\in{\cal S}}\chi_{i}\vec{\Omega}_{i}\right\|l_{3}$. 

To design $\chi_{i}$ and satisfy the constraint on 
$\theta$ in (\ref{eqn:constraint_chi}), the magnitude and direction 
must be known for $\vec{\Omega}_{\rm Glut}$, $\vec{\Omega}_{\rm Ham}$,  
$\vec{\Omega}_{\rm Gast}$, and $\vec{\Omega}_{\rm Quad}$.   
The directions of $\vec{\Omega}_{\rm Glut}$ and $\vec{\Omega}_{\rm Gast}$ 
can be obtained analytically as a function of the crank 
angle\footnote{
Analytic solutions of $q_{3}$ at $\chi_{{\rm Glut}}=1$ and 
$\chi_{{\rm Gast}}=1$ are shown in Appendix~B.}. 
However, $\vec{\Omega}_{\rm Ham}$ and $\vec{\Omega}_{\rm Quad}$ consist 
of multiple muscles where the force directions are known but the relative 
magnitudes of the forces are unknown, and thus, the directions of 
$\vec{\Omega}_{\rm Ham}$ and $\vec{\Omega}_{\rm Quad}$ have to be 
estimated numerically from experimental data. Further, the relative
magnitudes of $\vec{\Omega}_{\rm Glut}$, $\vec{\Omega}_{\rm Ham}$,  
$\vec{\Omega}_{\rm Gast}$, and $\vec{\Omega}_{\rm Quad}$ are 
unknown functions of the crank angle and crank velocity, and thus, 
the activation ratio $\chi_{i}$ must be designed based on 
experimental data. 
\begin{assumption}\label{ass:4}
The first and second partial derivatives of $\chi_{i}$ with respect 
to the crank angle and crank velocity are assumed to exist 
and are bounded. 
Thus from (\ref{eqn:Omega_Glut})--(\ref{eqn:Omega_Quad}) 
and Assumption \ref{ass:2}, 
the first and second partial derivatives of $\Omega_{\chi}$
are bounded if $q^{k}\in{\cal L}_{\infty}$ 
for $k=0$, $1$, $2$, $3$.
and $\Omega_{\chi}$ is assumed to be a bounded 
function. 
\end{assumption}
From Assumption \ref{ass:2}, 
$\Omega_{i}$, $i\in {\cal T'}$, ${\cal T'}:=\{e_{1}$, $e_{2}$, $f_{2}$, 
$ef_{3}$, $fe_{3}$, $f_{4}\}$ is bounded such that 
$\xi_{i}>\Omega_{i}>\varepsilon_{i}>0$, $i\in {\cal T'}$ 
where $\xi_{i}$, $\varepsilon_{i}\in\R$ 
are positive constants. 
Further, from Assumptions \ref{ass:1} and \ref{ass:2}, 
$\Omega_{\chi}$ is bounded such that 
$\xi_{\Omega_{\chi}}>\Omega_{\chi}>\varepsilon_{\Omega_{\chi}}>0$  
where $\xi_{\Omega_{\chi}}$, $\varepsilon_{\Omega_{\chi}}\in \R$ are 
positive constants. 

\section{STABILITY ANALYSIS}\label{sec:stability}
The position error is defined as
\begin{eqnarray}
 e_{1}&=&q_{d}-q\label{eqn:e1}
\end{eqnarray}
where $q_{d}$ is the desired crank angle which is designed such that 
$q_{d}$, $q_{d}^{k}\in {\cal L}_{\infty}$, where $q_{d}^{k}$ denotes 
the $k$th time derivative of $q_{d}$ for $k=1,2,3,4$. 
To facilitate the subsequent analysis, 
the filtered tracking errors $e_{2}$,  $r\in\R$ are defined as
\begin{eqnarray}
 e_{2}&=&\dot{e}_{1}+\alpha_{1}e_{1}\label{eqn:e2}\\ 
 r&=&\dot{e}_{2}+\alpha_{2}e_{2}\label{eqn:r}
\end{eqnarray}
where $\alpha_{1}$, $\alpha_{2}\in\R$ are selectable 
positive constants. 
By using (\ref{eqn:e1})--(\ref{eqn:r}), 
the crank dynamics in (\ref{eqn:OpenDyn2}) can be transformed as follows
\begin{eqnarray}
M(q)r
&=&M(q)(\ddot{q}_{d}+\alpha_{1}\dot{e}_{1}+\alpha_{2}e_{2})
+C(q,\dot{q})\dot{q}\nonumber\\
&&\ \ \ -M_{e}(q)-M_{v}(\dot{q})
+g(q)+d-\Omega_{\chi}u\nonumber\\
&=&W+d-\Omega_{\chi}u \label{eqn:OpenDyn3}
\end{eqnarray}
where $W$ is defined as
\begin{eqnarray}
W&:=&M(q)(\ddot{q}_{d}+\alpha_{1}\dot{e}_{1}+\alpha_{2}e_{2})
+C(q,\dot{q})\dot{q}\nonumber\\
&&\ \ \ -M_{e}(q)-M_{v}(\dot{q})+g(q). 
\end{eqnarray}
After multiplying (\ref{eqn:OpenDyn3}) by $\Omega_{\chi}^{-1}$, 
the following dynamics can be obtained. 
\begin{eqnarray}
M_{\Omega}(q,\dot{q})r&=&W_{\Omega}-u+d_{\Omega} \label{eqn:OpenDyn4}
\end{eqnarray}
where $M_{\Omega}(q,\dot{q})$, $W_{\Omega}$ and $d_{\Omega}$ 
are defined as
\begin{eqnarray*}
 M_{\Omega}(q,\dot{q})&:=&\Omega_{\chi}^{-1}M(q) \\
 W_{\Omega}&:=&\Omega_{\chi}^{-1}W
\nonumber\\
&=&M_{\Omega}(q,\dot{q})(\ddot{q}_{d}+\alpha_{1}\dot{e}_{1}+\alpha_{2}e_{2})
+C_{\Omega}(q,\dot{q})\dot{q}\nonumber\\
&&\hspace{5ex}
-M_{e\Omega}(q,\dot{q})-M_{v\Omega}(q,\dot{q})+g_{\Omega}(q,\dot{q})\\
d_{\Omega}&:=&\Omega_{\chi}^{-1}d.
\end{eqnarray*}
From Assumptions~\ref{ass:1}, \ref{ass:2}, \ref{ass:4} and the facts that 
$\underline{M}\leq M(q)\leq\overline{M}$ 
where $\underline{M}$ and $\overline{M}$ are positive constants, 
we have that
\begin{eqnarray}
\underline{M}_{\Omega} \leq M_{\Omega}\leq \overline{M}_{\Omega}, 
\end{eqnarray}
where $\underline{M}_{\Omega}$, $\overline{M}_{\Omega}\in\R$ are 
positive constants. 
Also, the following auxiliary terms are defined: 
\begin{eqnarray*}
S_{d}&:=&M_{d\Omega}\ddot{q}_{d}+C_{d\Omega}\dot{q}_{d}
-M_{ed\Omega}-M_{vd\Omega}
+g_{d\Omega}+d_{d\Omega}
\\
M_{d\Omega}&:=&M_{\Omega}(q_{d},\dot{q}_{d}),
\ \
C_{d\Omega}:=C_{\Omega}(q_{d},\dot{q}_{d})
\\
M_{ed\Omega}&:=&M_{e\Omega}(q_{d},\dot{q}_{d}), 
\ \
M_{vd\Omega}:=M_{v\Omega}(q_{d},\dot{q}_{d})
\\
g_{d\Omega}&:=&g_{\Omega}(q_{d},\dot{q}_{d}), 
\ \
d_{d\Omega}:=d_{\Omega}(q_{d},\dot{q}_{d}). 
\end{eqnarray*}

To facilitate the stability analysis, 
the time derivative of (\ref{eqn:OpenDyn4}) 
can be determined as
\begin{eqnarray}
M_{\Omega}(q,\dot{q})\dot{r}
&=&-\dot{M}_{\Omega}(q,\dot{q})r+\dot{W}_{\Omega}
-\dot{u}
+\dot{d}_{\Omega}\nonumber\\
&=&-\dfrac{1}{2}\dot{M}_{\Omega}(q,\dot{q})r
+N
-\dot{u}
-e_{2}\nonumber\\
&=&-\dfrac{1}{2}\dot{M}_{\Omega}(q,\dot{q})r
+\tilde{N}+N_{d}
-\dot{u}
-e_{2} \label{eqn:OpenDyn5}
\end{eqnarray}
where $N$, $N_{d}$ and $\tilde{N}\in\R$ denote the following unmeasurable 
auxiliary terms
\begin{eqnarray*}
N&:=&
\dot{W}_{\Omega}
+e_{2}
-\dfrac{1}{2}\dot{M}_{\Omega}(q,\dot{q})r
+\dot{d}_{\Omega} \\
N_{d}&:=&\dot{S}_{d} \\
\tilde{N}&:=&N-N_{d}. 
\end{eqnarray*}
By applying the Mean Value Theorem, $\tilde{N}$ can be
upper bounded by state-dependent terms as
\begin{eqnarray}
\|\tilde{N}\|\leq\rho(\|z\|)\|z\| \label{eqn:UB_tildeN}
\end{eqnarray}
where $z\in\R^{3}$ is defined as
\begin{eqnarray}
 z:=\left[\begin{array}{ccc}
e_{1}& e_{2}& r
\end{array}
\right]^{T} \label{eqn:def_z}
\end{eqnarray} 
and $\rho(\|z\|)$ is some positive, 
nondecreasing function \cite{TAC04XiDaQuCh}.
By the design of the desired trajectory, $N_{d}$ can be 
upper bounded as 
\begin{eqnarray}
 \|N_{d}\|\leq \zeta_{N_{d}},\ \  \|\dot{N}_{d}\|\leq
  \zeta_{\dot{N}_{d}}, 
\label{eqn:zeta_Nd}
\end{eqnarray}
where $\zeta_{N_{d}}$, $\zeta_{\dot{N}_{d}}\in\R$ are
known positive constants. 

The voltage control input is designed as \cite{TNSRE09ShStGrDi}
\begin{eqnarray}
 u&=&(k_{s}+1)(e_{2}-e_{2}(0))+\nu  \label{eqn:u} \\
\dot{\nu}&=&(k_{s}+1)\alpha_{2}e_{2}+\beta{\sgn}(e_{2}),\ \nu(0)=\nu_{0}
\end{eqnarray}
where $\nu$ is the generalized Filippov solution to $\dot{\nu}$, 
$\nu_{0}$ is some initial condition, 
$k_{s}$, $\beta\in\R$ are positive, constant control gains, 
and ${\sgn}(\cdot)$ denotes the signum function. 

To facilitate the subsequent stability analysis, $y$ and $Q$ are 
defined as 
\begin{eqnarray}
 y:=\left[\begin{array}{c}
z\cr
\sqrt{P}
\end{array}
\right],\ \ 
Q:=\left[\begin{array}{ccc}
\alpha_{1}&-\frac{1}{2}&0\cr
-\frac{1}{2}&\alpha_{2}&0\cr
0&0&1\cr
\end{array}\right], 
\end{eqnarray}
where $P\in\R$ is the Filippov solution to
\begin{eqnarray}
 \dot{P}&=&-r^{T}\bigl(N_{d}-\beta\sgn(e_{2})
\bigr), 
\label{eqn:dotP(t)}\\
P(0)&=&\beta |e_{2}(0)|-e_{2}(0)N_{d}(0). \label{eqn:P(0)}
\end{eqnarray}
\begin{theorem}\label{thm:1}
The control law of (\ref{eqn:u}) yields semi-global 
asymptotic tracking in the sense that
\begin{eqnarray}
|e_{1}|\ \rightarrow\ 0\ \ as\ \ t\  \rightarrow\ \infty
\label{eqn:e1_ra_infty}
\end{eqnarray}
for the region of attraction ${\cal D}_{z}$
\begin{eqnarray}
{\cal D}_{z}=\left\{y\ |\ \rho\left(
\sqrt{\dfrac{\lambda_{2}}{\lambda_{1}}}\|y\|
\right)<2\sqrt{\lambda_{\min}(Q)k_{s}}\right\}
\end{eqnarray}
where 
$\lambda_{1}:=\frac{1}{2}\min\{1,\underline{M}_{\Omega}\}$, 
$\lambda_{2}:=\max\left\{\frac{1}{2}\overline{M}_{\Omega},1
\right\}$, 
and $\lambda_{\min}(Q)$ denotes the minimum eigenvalue of $Q$,  
provided $\alpha_{1}$, $\alpha_{2}$, $\beta$ and $k_{s}$ are selected 
according 
to the following sufficient conditions:
\begin{eqnarray}
\alpha_{1}\alpha_{2}&>&\dfrac{1}{4},\\
\beta&>&\left(\zeta_{N_{d}}+\dfrac{1}{\alpha_{2}}
\zeta_{\dot{N}_{d}}\right)\label{eqn:sufficient_condition},\\
k_{s}&>&\frac{1}{4\lambda_{\min}(Q)}\rho(\|z(0)\|)^{2},
\end{eqnarray}
where $\zeta_{N_{d}}$ and $\zeta_{\dot{N}_{d}}$ were introduced 
in (\ref{eqn:zeta_Nd}). 
\end{theorem}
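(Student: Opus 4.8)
The plan is to follow the standard RISE Lyapunov argument of \cite{TNSRE09ShStGrDi,TAC04XiDaQuCh}. First I would close the loop: differentiating (\ref{eqn:u}) gives $\dot{u}=(k_{s}+1)r+\beta\sgn(e_{2})$, so (\ref{eqn:OpenDyn5}) becomes $M_{\Omega}\dot{r}=-\tfrac{1}{2}\dot{M}_{\Omega}r+\tilde{N}+N_{d}-(k_{s}+1)r-\beta\sgn(e_{2})-e_{2}$. Then introduce the composite Lyapunov function candidate $V(y,t)=\tfrac{1}{2}e_{1}^{2}+\tfrac{1}{2}e_{2}^{2}+\tfrac{1}{2}M_{\Omega}r^{2}+P$, which by $\underline{M}_{\Omega}\leq M_{\Omega}\leq\overline{M}_{\Omega}$ and the definition of $y$ satisfies $\lambda_{1}\|y\|^{2}\leq V\leq\lambda_{2}\|y\|^{2}$ with $\lambda_{1},\lambda_{2}$ as in the statement. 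A preliminary lemma is that $P(t)\geq 0$: integrating $\dot{P}=-r^{T}(N_{d}-\beta\sgn(e_{2}))$, substituting $r=\dot{e}_{2}+\alpha_{2}e_{2}$, and integrating the $\dot{e}_{2}$ contribution by parts produces boundary terms $e_{2}^{T}N_{d}$, $\beta|e_{2}|$ and integrands bounded by $|e_{2}|(\zeta_{\dot{N}_{d}}+\alpha_{2}\zeta_{N_{d}}-\alpha_{2}\beta)$, together with a boundary term $|e_{2}(t)|(\zeta_{N_{d}}-\beta)$; the bounds (\ref{eqn:zeta_Nd}) and the gain condition (\ref{eqn:sufficient_condition}) render each nonpositive, so $\int_{0}^{t}r^{T}(N_{d}-\beta\sgn(e_{2}))\,d\tau\leq P(0)$ and hence $P(t)\geq 0$.

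Next I would compute the (generalized) time derivative of $V$ along the Filippov solutions, taking the derivative in the sense of the Clarke generalized gradient since $\sgn(e_{2})$ is discontinuous. Using $\dot{e}_{1}=e_{2}-\alpha_{1}e_{1}$, $\dot{e}_{2}=r-\alpha_{2}e_{2}$, the closed-loop $\dot{r}$ equation above, and $\dot{P}=-r^{T}(N_{d}-\beta\sgn(e_{2}))$, the $\tfrac{1}{2}\dot{M}_{\Omega}r^{2}$ term and the indefinite cross terms $rN_{d}$ and $r\beta\sgn(e_{2})$ cancel, leaving $\dot{V}\leq -z^{T}Qz-k_{s}r^{2}+r\tilde{N}$. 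Applying the bound (\ref{eqn:UB_tildeN}), using $\|r\|\leq\|z\|$, and completing the square on $-k_{s}r^{2}+|r|\rho(\|z\|)\|z\|$ yields $\dot{V}\leq-\big(\lambda_{\min}(Q)-\rho(\|z\|)^{2}/(4k_{s})\big)\|z\|^{2}$. The condition $\alpha_{1}\alpha_{2}>\tfrac{1}{4}$ makes $Q$ positive definite, and since $V$ is nonincreasing we have $\|z(t)\|\leq\|y(t)\|\leq\sqrt{\lambda_{2}/\lambda_{1}}\,\|y(0)\|$, so on the region of attraction ${\cal D}_{z}$ (with $k_{s}>\rho(\|z(0)\|)^{2}/(4\lambda_{\min}(Q))$ ensuring the trajectory starts inside) the bracket stays strictly positive, giving $\dot{V}\leq-c\|z\|^{2}$ for some $c>0$ and all $t$.

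Finally I would invoke the usual boundedness/convergence chain. From $\dot{V}\leq 0$ we get $V\in{\cal L}_{\infty}$, hence $e_{1},e_{2},r,P\in{\cal L}_{\infty}$; then (\ref{eqn:e1})--(\ref{eqn:r}) with $q_{d}^{k}\in{\cal L}_{\infty}$ give $q,\dot{q},\ddot{q}\in{\cal L}_{\infty}$, and Assumptions~\ref{ass:1}--\ref{ass:4} then yield boundedness of $\Omega_{\chi}$, $M_{\Omega}$, $W_{\Omega}$, $d_{\Omega}$, and therefore of $u$, $\dot{u}$, and $\dot{r}$, so $\dot{z}\in{\cal L}_{\infty}$ and $z$ is uniformly continuous. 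Integrating $\dot{V}\leq-c\|z\|^{2}$ over $[0,\infty)$ shows $z\in{\cal L}_{2}$, so by Barbalat's lemma $\|z(t)\|\to 0$, and in particular $|e_{1}|\to 0$ as $t\to\infty$. I expect the main obstacle to be the $P\geq 0$ argument together with the nonsmooth bookkeeping forced by the $\sgn$ term — i.e.\ carrying out the integration-by-parts bound rigorously and justifying the differential-inclusion analysis (existence of Filippov solutions, validity of the generalized derivative of $V$, and the nonsmooth version of Barbalat/LaSalle) — since the rest of the manipulation is routine algebra once the cancellations are set up.
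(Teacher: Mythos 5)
Your proposal is correct and follows essentially the same route as the paper's Appendix~C proof: the identical Lyapunov candidate $V=\frac{1}{2}e_{1}^{2}+\frac{1}{2}e_{2}^{2}+\frac{1}{2}r^{T}M_{\Omega}r+P$, the same integration-by-parts argument (using $r=\dot{e}_{2}+\alpha_{2}e_{2}$ and the gain condition on $\beta$) to show $P\geq 0$, the same Filippov/Clarke generalized-derivative computation with cancellation of the $\sgn$ and $N_{d}$ terms, and the same completion of squares yielding the semi-global bound $\dot{V}\overset{a.e.}{\leq}-\left(\lambda_{\min}(Q)-\rho(\|z\|)^{2}/(4k_{s})\right)\|z\|^{2}$. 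Your closing step via $z\in{\cal L}_{2}$ and a nonsmooth Barbalat argument is the classical form of what the paper accomplishes by invoking Corollary~1 of \cite{TAC13FiKaDi}, so the two are substantively the same.
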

\begin{proof}
The proof for Theorem~\ref{thm:1} closely follows the proof 
given in \cite{CDC13DoChDi}. 
The proof details are provided in Appendix C. 
\end{proof}

\section{CONCLUSIONS}
This paper considered tracking control for FES-cycling 
based on force direction efficiency derived from 
using the antagonistic bi-articular muscles. 
A muscle group force decomposition is developed 
to improve cycling efficiency. 
A tracking controller and associated stability analysis 
are developed for an uncertain nonlinear cycle-rider system in the 
presence of an unknown time-varying disturbance, and semi-global 
asymptotic tracking of the desired trajectories is guaranteed 
provided sufficient control gain conditions are satisfied. 
Ongoing efforts are focused on experimental demonstration 
of the developed controller.


\section*{APPENDIX}
\subsection{Reduced Model}\label{app:model}
Using the constraints in (\ref{eqn:constraints})
and the parameterization in (\ref{eqn:parameterization}), let 
\begin{eqnarray}
\psi(q')&:=&
\left[\begin{array}{c}
\phi(q')\cr
\alpha(q')\cr
\end{array}\right]
=\left[\begin{array}{c}
0\cr
q
\end{array}\right]. \label{eqn:psi(q')}
\end{eqnarray}
Differentiating (\ref{eqn:psi(q')}) with respect to time yields
\begin{eqnarray}
\psi_{q'}(q')\dot{q}'&=&\left[\begin{array}{ccc}
0&0&1
\end{array}\right]^{T}\dot{q}
\end{eqnarray}
where
\begin{eqnarray*}
\psi_{q'}(q')&:=&
\frac{\partial \psi(q')}{\partial q'}=\left[\begin{array}{c@{~}c@{~}c}
-l_{1}S_{1}-l_{2}S_{12}&
-l_{2}S_{12}&
l_{3}S_{3}\cr
l_{1}C_{1}+l_{2}C_{12}&
l_{2}C_{12}&
-l_{3}C_{3}\cr
0&0&1
\end{array}\right]\label{eqn:psi_{q'}(q')}. 
\end{eqnarray*}
Therefore $\mu(q')$ is obtained as
\begin{eqnarray}
\mu(q')=\psi_{q'}^{-1}(q')
\left[\begin{array}{ccc}
0&0&1
\end{array}\right]^{T}, \label{eqn:rho_q'}
\end{eqnarray}
where $\det(\psi_{q'})=l_{1}l_{2}S_{2}\neq 0$ except for 
$q_{2}=n\pi,\ n\in {\cal Z}$. 
Thus, there exists $\psi_{q'}^{-1}(q')$ by Assumption~\ref{ass:1}
i.e., the knee joint angle $q_{2}$ never equals $n\pi,\ n\in {\cal Z}$.  

By solving the constraints ${\cal C}$ in (\ref{eqn:constraints}),  
$q_{1}$ and $q_{2}$ can be represented 
as functions of $q_{3}$ as 
\begin{eqnarray}
q_{1}&=&\cos^{-1}\left(
\frac{l_{1}^{2}+(l_{3}C_{3}+c_{x})^{2}+(l_{3}S_{3}+c_{y})^{2}-l_{2}^{2}}
{2l_{1}\sqrt{(l_{3}C_{3}+c_{x})^{2}+(l_{3}S_{3}+c_{y})^{2}}}
\right)\nonumber\\
&&\ \ \ +\tan^{-1}\left(\frac{l_{3}S_{3}+c_{y}}
{l_{3}C_{3}+c_{x}}\right) \label{eqn:sigma_q1} \\
q_{2}&=&\cos^{-1}\left(\frac{l_{1}^{2}+l_{2}^{2}-
(l_{3}C_{3}+c_{x})^{2}-(l_{3}S_{3}+c_{y})^{2}}
{2l_{1}l_{2}}
\right)\nonumber\\
&&\ \ \ +\pi. \label{eqn:sigma_q2}
\end{eqnarray}
The expressions in (\ref{eqn:sigma_q1}) and (\ref{eqn:sigma_q2}) 
yield the parameterization $\sigma(q)$. 

\subsection{Analytic Solution of $q_{3}$ for 
$\chi_{{\rm Glut}}=1$ and $\chi_{{\rm Gast}}=1$}
This appendix develops on 
analytic solutions of $q_{3}$ at $\chi_{{\rm Glut}}=1$ and 
$\chi_{{\rm Gast}}=1$. 
The crank angle which satisfies that 
$\vec{F}_{\rm Glut}$ and $\vec{l}_{3}$ cross at right angles 
is denoted by $q_{\rm Glut}$. 
In other words, $q_{\rm Glut}$ equals $q_{3}$ which satisfies 
\begin{eqnarray}
q_{3}-\frac{\pi}{2}=q_{1}+q_{2}. \label{eqn:q3forGlut}
\end{eqnarray}
From (\ref{eqn:constraints}) and (\ref{eqn:q3forGlut}), 
\begin{eqnarray}
q_{\rm Glut}&=&\sin^{-1}\left(
\dfrac{l_{3}^{2}+l_{2}^{2}-l_{1}^{2}+c_{x}^{2}+c_{y}^{2}}
{-2\sqrt{(c_{y}l_{3}-c_{x}l_{2})^{2}+(c_{x}l_{3}+c_{y}l_{2})^{2}}}
\right)\nonumber\\
&&-\varphi_{1}+2n\pi,\ \ n\in{\cal Z}, 
\end{eqnarray}
where
\begin{eqnarray}
\varphi_{1}&:=&
\tan^{-1}\left(\dfrac{c_{x}l_{3}+c_{y}l_{2}}{c_{y}l_{3}-c_{x}l_{2}}
\right)+\pi. 
\end{eqnarray}

In a similar way, 
$q_{\rm Gast}$ is defined as a crank angle when 
$\vec{F}_{\rm Gast}$ and $\vec{l}_{3}$ cross at right angles. 
In other words, $q_{\rm Gast}$ equals $q_{3}$ which satisfies 
\begin{eqnarray}
q_{3}-\frac{\pi}{2}&=&\tan^{-1}\left(
\dfrac{l_{1}S_{1}+l_{2}S_{12}}{l_{1}C_{1}+l_{2}C_{12}}
\right)
=\tan^{-1}\left(
\dfrac{l_{3}S_{3}+c_{y}}{l_{3}C_{3}+c_{x}}
\right),\nonumber\\ \label{eqn:q3forGast}
\end{eqnarray}
where (\ref{eqn:constraints}) was utilized.  
From (\ref{eqn:q3forGast}), 
\begin{eqnarray}
\hspace*{-3ex}
q_{\rm Gast}=\sin^{-1}\left(\frac{l_{3}}{-\sqrt{c_{y}^{2}+c_{x}^{2}}}\right)
-\varphi_{2}+2n\pi,\ \ n\in{\cal Z}, 
\end{eqnarray}
where
\begin{eqnarray}
\varphi_{2}:=\tan^{-1}\left(\frac{c_{x}}{c_{y}}\right)+\pi. 
\end{eqnarray}

\subsection{Proof of Theorem \ref{thm:1}}
\begin{proof}
Consider the following positive definite 
continuously differentiable function
\begin{eqnarray}
V(y)=\frac{1}{2}r^{T}M_{\Omega}r+\frac{1}{2}e_{1}^{T}e_{1}
+\frac{1}{2}e_{2}^{T}e_{2}+P \label{eqn:V(y)}. 
\end{eqnarray}
Integrating (\ref{eqn:dotP(t)}) indicates that
\begin{eqnarray}
&&\hspace{-3ex}
P(t)-P(0)
\nonumber\\
&=&
-\int_{0}^{t}\alpha_{2} e_{2}(\tau)
\bigl(N_{d}(\tau)-\beta{\sgn}(e_{2}(\tau))\bigr)d\tau 
\nonumber\\
&&
-\int_{0}^{t}\dfrac{d(e_{2}(\tau))}{d\tau}
\bigl(
N_{d}(\tau)-
\beta{\sgn}(e_{2}(\tau))
\bigr)
d\tau
\nonumber\\
&=&
-\int_{0}^{t}\alpha_{2} e_{2}(\tau)
\bigl(N_{d}(\tau)-\beta{\sgn}(e_{2}(\tau))\bigr)d\tau 
\nonumber\\
&&
-e_{2}(\tau)
\left.\bigl.N_{d}(\tau)\right|_{0}^{t}
-\int_{0}^{t}e_{2}(\tau)
\dfrac{dN_{d}(\tau)}{d\tau}
d\tau
+\beta\left.\bigl.|e_{2}(\tau)|\right|_{0}^{t}
\nonumber\\
&=&
-\int_{0}^{t}\alpha_{2} e_{2}(\tau)
\left(N_{d}(\tau)
+\dfrac{1}{\alpha_{2}}\dfrac{dN_{d}(\tau)}{d\tau}
-\beta{\sgn}(e_{2}(\tau))\right)
d\tau 
\nonumber\\
&&
-e_{2}(t)N_{d}(t)
+e_{2}(0)N_{d}(0)
+\beta|e_{2}(t)|
-\beta|e_{2}(0)|
\nonumber\\
&=&
\int_{0}^{t}\alpha_{2} e_{2}(\tau)
\left(
\beta{\sgn}(e_{2}(\tau))
-N_{d}(\tau)
-\dfrac{1}{\alpha_{2}}\dfrac{dN_{d}(\tau)}{d\tau}
\right)
d\tau 
\nonumber\\
&&
-e_{2}(t)N_{d}(t)
+e_{2}(0)N_{d}(0)
+\beta|e_{2}(t)|
-\beta|e_{2}(0)|
\nonumber\\
&\geq&
\int_{0}^{t}\alpha_{2} |e_{2}(\tau)|
\left(
\beta
-|N_{d}(\tau)|
-\dfrac{1}{\alpha_{2}}\left|\dfrac{dN_{d}(\tau)}{d\tau}\right|
\right)
d\tau 
\nonumber\\
&&
+|e_{2}(t)|\bigl(\beta-|N_{d}(t)|\bigr)
-\bigl(\beta|e_{2}(0)|
-e_{2}(0)N_{d}(0)\bigr) \label{eqn:Pt-P0}
\end{eqnarray}
Based on the sufficient condition in (\ref{eqn:sufficient_condition}), 
(\ref{eqn:P(0)}) and (\ref{eqn:Pt-P0}) indicate that $P(t)\geq 0$,   
and (\ref{eqn:V(y)}) satisfies the following inequalities: 
\begin{eqnarray}
\lambda_{1}\|y\|^{2}\leq V\leq \lambda_{2}\|y\|^{2}. \label{eqn:ineq_V}
\end{eqnarray}

The time derivative of (\ref{eqn:V(y)}) exists almost 
everywhere (a.e.), i.e., for almost all $t\in[0,\infty)$, and 
$\dot{V}\overset{a.e.}\in \dot{\tilde{V}}$ where
\begin{eqnarray}
\dot{\tilde{V}}:=\underset{\xi\in \partial V}\cap
\xi^{T}K\left[
\dot{e}_{1}^{T}\ \ 
\dot{e}_{2}^{T}\ \ 
\dot{r}^{T}\ \ 
\frac{1}{2}P^{-\frac{1}{2}}\dot{P}\ \ 
1
\right]^{T} \label{eqn:def_dot_tilde_V}
\end{eqnarray}
and $\partial V$ is the generalized gradient of $V$. Since $V$ is
continuously differentiable, (\ref{eqn:def_dot_tilde_V}) can be 
rewritten as
\begin{eqnarray}
\dot{\tilde{V}}\subset\nabla V^{T}
K\left[\dot{e}_{1}^{T}\ \ 
\dot{e}_{2}^{T}\ \  
\dot{r}^{T}\ \  
\frac{1}{2}P^{-\frac{1}{2}}\dot{P}\ \ 
1 
\right]^{T} \label{eqn:dot_tilde_V1}
\end{eqnarray}
where $\nabla V:=\left[e_{1}^{T}\ \ e_{2}^{T}\ \ r^{T}M_{\Omega}\ \ 
2P^{\frac{1}{2}}\ \ \dfrac{1}{2}r^{T}\dot{M}_{\Omega}r
\right]^{T}$. 
Using $K[\cdot]$ from \cite{TCS87PaSa}, (\ref{eqn:dot_tilde_V1}) yields
\begin{eqnarray}
\dot{\tilde{V}}&\subset&
e_{1}^{T}(e_{2}-\alpha_{1}e_{1})
+e_{2}^{T}(r-\alpha_{2}e_{2})
\nonumber\\
&&
+r^{T}\Bigl(
-\dfrac{1}{2}\dot{M}_{\Omega}(q)r
+\tilde{N}+N_{d}
-(k_{s}+1)\dot{e}_{2} 
\nonumber\\
&&
-(k_{s}+1)\alpha_{2}e_{2}(t)-\beta K\bigl[{\sgn}(e_{2})\bigr]
-e_{2}\Bigr)
\nonumber\\
&&
+K\bigl[\dot{P}\bigr]
+\dfrac{1}{2}r^{T}\dot{M}_{\Omega}r.\label{eqn:dot_tilde_V2}
\end{eqnarray}
By substituting $\dot{P}$ from (\ref{eqn:dotP(t)}), 
(\ref{eqn:dot_tilde_V2}) can be transformed into
\begin{eqnarray}
\dot{\tilde{V}}&\subset&
e_{1}^{T}(e_{2}-\alpha_{1}e_{1})
-\alpha_{2}e_{2}^{2}
\nonumber\\
&&
+r^{T}\Bigl(
\tilde{N}+N_{d}
-(k_{s}+1)r
-\beta K\bigl[{\sgn}(e_{2})\bigr]
\Bigr)
\nonumber\\
&&
+K\bigl[-r^{T}(N_{d}-\beta{\sgn}(e_{2}))\bigr]
\nonumber\\
&=&
e_{1}^{T}(e_{2}-\alpha_{1}e_{1})
-\alpha_{2}e_{2}^{2}
\nonumber\\
&&
+r^{T}\Bigl(
\tilde{N}
-(k_{s}+1)r
-\beta K\bigl[{\sgn}(e_{2})\bigr]
\Bigr)
\nonumber\\
&&
+r^{T}\beta K\bigl[\sgn(e_{2})\bigr]. 
\label{eqn:dot_tilde_V3}
\end{eqnarray}
Eq.~(\ref{eqn:dot_tilde_V3}) can be further upper bounded as
\begin{eqnarray}
\dot{V}&\overset{a.e.}\leq&
-\alpha_{1}e_{1}^{2}
+e_{1}^{T}e_{2}
-\alpha_{2}e_{2}^{2}
+r^{T}\tilde{N}
-(k_{s}+1)r^{2}
\nonumber\\
&=&r^{T}\tilde{N}-k_{s}r^{2}-z^{T}Qz
\label{eqn:dot_tilde_V4}
\end{eqnarray}
where the set in (\ref{eqn:dot_tilde_V3}) reduces to the 
scalar inequality in (\ref{eqn:dot_tilde_V4}) because 
the right-hand side is continuous a.e., i.e., 
the right-hand side is continuous except for the Lebesgue negligible 
set of times when
\begin{eqnarray*}
r^{T}\Bigl(
\beta K[\sgn(e_{2})]-\beta K[\sgn(e_{2})]
\Bigr) \neq \{0\}. 
\end{eqnarray*}
By using Eq.~({\ref{eqn:UB_tildeN}}), 
the term $r^{T}\tilde{N}$ can be upper bounded as 
\begin{eqnarray}
\|r^{T}\tilde{N}\|\leq\rho(\|z\|)\|z\|\|r\|
\end{eqnarray}
to obtain
\begin{eqnarray}
\dot{V}\overset{a.e.}\leq-\lambda_{\min}(Q)\|z\|^{2}
+\rho(\|z\|)\|z\|\|r\|-k_{s}\|r\|^{2}.
\end{eqnarray}
By completing the squares, 
\begin{eqnarray}
\dot{V}&\overset{a.e.}\leq&-\lambda_{\min}(Q)\|z\|^{2}
-k_{s}\left(
\|r\|-\dfrac{\rho(\|z\|)\|z\|}{2k_{s}}
\right)^{2}\nonumber\\
&&\ \ \ +\dfrac{\rho(\|z\|)^{2}\|z\|^{2}}{4k_{s}}\nonumber\\
&\leq&
-\left(\lambda_{\min}(Q)-\dfrac{\rho(\|z\|)^{2}}{4k_{s}}\right)\|z\|^{2}. 
\label{eqn:dotV}
\end{eqnarray}
From (\ref{eqn:dotV}), it follows that 
\begin{eqnarray}
\dot{V}\overset{a.e.}\leq
-U = -\gamma \|z\|^{2},\ \  \forall y\in D \label{eqn:ineq_dotV}
\end{eqnarray}
where $\gamma\in\R$ is some positive constant, and 
$D:=\{y\in \R^{3+1}\ |\ \rho(\|y\|)<2\sqrt{\lambda_{\rm min}(Q)
k_{s}}\}$. From the inequalities in (\ref{eqn:ineq_V}) and  
(\ref{eqn:ineq_dotV}), $V\in {\cal L}_{\infty}$, and hence, 
$e_{1}$, $e_{2}$, and $r\in L_{\infty}$. The remaining signals in 
the closed-loop dynamics can be proven to be bounded. By the 
definition of $z$ in (\ref{eqn:def_z}) and $U$ in 
(\ref{eqn:ineq_dotV}), $U$ can be shown to be uniformly continuous. 
Then, the region of attraction $D_{z}$ can be expanded arbitrarily 
by increasing $k_{s}$. By invoking Corollary 1 in \cite{TAC13FiKaDi}, 
$\gamma\|z\|^{2}\rightarrow 0$ as $t\rightarrow\infty$, 
$\forall y(0)\in {\cal D}_{z}$. Based on the definition of $z$, 
$e_{1}\rightarrow 0$ as $t\rightarrow\infty$, 
$\forall y(0)\in {\cal D}_{z}$. 
\end{proof}


\begin{thebibliography}{99}
%
\bibitem{TBE12NeEr}
V. Nekoukar and A. Erfanian, 
``A Decentralized Modular Control Framework for 
Robust Control of FES-Activated Walker-Assisted
Paraplegic Walking Using Terminal Sliding Mode
and Fuzzy Logic Control,''
{\it IEEE Trans.~on Biomedical Engineering}, 
Vol.~59, No.~10, pp.~2818--2827, 2012.
%
\bibitem{TNSRE05KaShMuBa}
R. Kamnik, J. Q. Shi, R. Murray-Smith and T. Bajd
``Nonlinear Modeling of FES-Supported Standing-Up
in Paraplegia for Selection of Feedback Sensors,'' 
{\it IEEE Trans.~on Neural Systems and Rehabilitation Engineering}, 
Vol.~13, No.~1, pp.~40--52, 2005.
%
\bibitem{TNSRE12WeScVeKo}
A. J. Westerveld, A. C. Schouten, P. H. Veltink and H. van der Kooij, 
``Selectivity and Resolution of Surface Electrical
Stimulation for Grasp and Release,''
{\it IEEE Trans.~on Neural Systems and Rehabilitation Engineering}, 
Vol.~20, No.~1, pp.~94--101, 2012.
%
\bibitem{JMBE11PeChLaChChMiHa}
C.-W. Peng, S.-C. Chen, C.-H. Lai, C.-J. Chen, C.-C. Chen,
J. Mizrahi and Y. Handa, 
``Review: Clinical Benefits of Functional Electrical Stimulation 
Cycling Exercise for Subjects with Central Neurological,''
{\it Journal of Medical and Biological Engineering}, 
Vol.~31, No.~1, pp.~1--11, 2011.
%
\bibitem{THC12HuFaSaGrLa}
K.J. Hunt, J. Fang, J. Saengsuwan, M. Grob and M. Laubacher, 
``On the Efficiency of FES Cycling: A
Framework and Systematic Review,''
{\it Technology and Health Care}, 
Vol.~20, No.~5, pp.~395--422, 2012.
%
\bibitem{TRE97ChYuHuAnCh}
J.-J. J. Chen, N.-Y. Yu, D.-G. Huang, B.-T Ann,
and G.-C. Chang, 
``Applying Fuzzy Logic to Control Cycling Movement
Induced by Functional Electrical Stimulation,''
{\it IEEE Trans.~on Rehabilitation Engineering}, 
Vol.~5, No.~2, pp.~158--169, 1997. 
%
\bibitem{TRE00GfLu}
M. Gf${\rm \ddot{o}}$hler and P. Lugner, 
``Cycling by Means of Functional Electrical
Stimulation,''
{\it IEEE Trans.~on Rehabilitation Engineering}, 
Vol.~8, No.~2, pp.~233--243, 2000.
%
\bibitem{TNSRE04GfLu}
M. Gf${\rm \ddot{o}}$hler and P. Lugner, 
``Dynamic Simulation of FES-Cycling:
Influence of Individual Parameters,''
{\it IEEE Trans.~on Neural Systems and Rehabilitation Engineering}, 
Vol.~12, No.~4, pp.~398--405, 2004.
%
\bibitem{TNSRE03EsDoKnSt}
P. C. Eser, N. de N. Donaldson, H. Knecht and E. St${\rm \ddot{u}}$ssi, 
``Influence of Different Stimulation Frequencies on
Power Output and Fatigue During FES-Cycling
in Recently Injured SCI People,''
{\it IEEE Trans.~on Neural Systems and Rehabilitation Engineering}, 
Vol.~11, No.~3, pp.~236--240, 2003.
%
\bibitem{SMC08KiEoHaKhTaYiJu}
C.-S. Kim, G.-M. Eom, K. Hase, G. Khang, G.-R. Tack,
J.-H. Yi and J.-H. Jun, 
``Stimulation Pattern-Free Control of
FES Cycling: Simulation Study,'' 
{\it IEEE Trans.~on Systems, Man, and Cybernetics--Part C}, 
Vol.~38, No.~1, pp.~125--134, 2008.
%
\bibitem{TNSRE09ShStGrDi}
N. Sharma, K. Stegath, C. M. Gregory and W. E. Dixon, 
``Nonlinear Neuromuscular Electrical Stimulation 
Tracking Control of a Human Limb,''
{\it IEEE Trans.on Neural Systems and Rehabilitation Engineering}, 
Vol.~17, No.~6, pp.~576--584, 2009.
%
\bibitem{TCST12ShGrJoDi}
N. Sharma, C. M. Gregory, M. Johnson and W. E. Dixon
``Closed-Loop Neural Network-Based NMES Control
for Human Limb Tracking,''
{\it IEEE Trans.on Control Systems Technology}, 
Vol.~20, No.~3, pp.~712--724, 2012.
%
\bibitem{CDC13DoChDi}
R. J. Downey, T.-H. Cheng and W. E. Dixon
``Tracking Control of Human Limb during Asynchronous
Neuromuscular Electrical Stimulation,''
{\it Proc. of the 52nd IEEE Conf. on Decision and Control}, 
2013(to appear).
%
\bibitem{JDSMC85Ho}
N. Hogan, 
``Impedance Control: An Approach to Manipulation:
Part I-Theory,'' ``Part II-Implementation'' and 
``Part III-Applications,'' 
{\it Journal of Dynamic Systems, Measurement, and Control}, 
Vol.~107, No.~1, pp.~1--24, 1985. 
%
\bibitem{TCST00GhChGuLo}
F. H. Ghorbel, O. Chetelat, R. Gunawardana and R. Longchamp, 
``Modeling and Set Point Control of Closed-Chain 
Mechanisms: Theory and Experiment,''
{\it IEEE Trans.~on Control Systems Technology}, 
Vol.~8, No.~5, pp.~801--815, 2000.
%
\bibitem{JBio04KrPaPe}
J. L. Krevolin, M. G. Pandy and J. C. Pearce, 
``Moment Arm of the Patellar Tendon in the Human Knee,'' 
{\it Journal of Biomechanics}, 
Vol.~37, No.~5, pp.~785--788, 2004.
%
\bibitem{RehEng97Buetal}
W. L. Buford, Jr., F. M. Ivey, Jr., J. D. Malone, R. M. Patterson, 
G. L. Peare, D. K. Nguyen and A. A. Stewart, 
``Muscle Balance at the Knee--Moment Arms for 
the Normal Knee and the ACL-Minus Knee,''
{\it IEEE Trans.~on Rehabilitation Engineering}, 
Vol.~5, No.~4, pp. 367--379, 1997. 
%
\bibitem{RehEng99WaFuHo}
T. Watanabe, R. Futami and N. Hoshimiya, 
``An Approach to a Muscle Model with a Stimulus
Frequency-Force Relationship for FES Applications,'' 
{\it IEEE Trans.~on Rehabilitation Engineering}, 
Vol.~7, No.~1, pp. 12--18, 1999. 
%
\bibitem{IGAKU08NaKu}
I. Nara, M. Kumamoto, $et\ al$., 
{\it Bi-articular Muscles: Motion control and rehabilitation}, 
Igaku-Shoin, 2008.
%
\bibitem{TRE00FePe}
M. Ferrarin and A. Pedotti, 
``The Relationship Between  Electrical Stimulus and 
Joint Torque: A Dynamic Model,''
{\it IEEE Trans.~on Rehabilitation Engineering}, 
Vol.~8, No.~3, pp.~342--352, 2000.
%
\bibitem{CEP05ScNePrHuFrFeRa}
T. Schauer, N.-O. Negard, F. Previdi, K.J. Hunt, 
M.H. Fraser, E. Ferchland and J. Raisch, 
``Online Identification and Nonlinear Control of the 
Electrically Stimulated Quadriceps Muscle,''
{\it Control Engineering Practice}, 
Vol.~13, No.~9, pp.~1207--1219, 2005.
%
\bibitem{TAC04XiDaQuCh}
B. Xian, D. M. Dawson, M. S. de Queiroz and J. Chen, 
{\it A Continuous Asymptotic Tracking Control Strategy for 
Uncertain Nonlinear Systems}, 
{\it IEEE Trans.~on Automatic Control}, 
Vol.~49, No.~7, pp.~1206--1211, 2004.
%
\bibitem{TCS87PaSa}
B. E. Paden and S. S. Sastry, 
``A Calculus for Computing Filippov's Differential 
Inclusion with Application to the Variable 
Structure Control of Robot Manipulators,''
{\it IEEE Trans.~on Circuits and Systems}, 
Vol.~34, No.~1, pp.~73--82, 1987.
%
\bibitem{TAC13FiKaDi}
N. Fischer, R. Kamalapurkar and W. E. Dixon
``LaSalle-Yoshizawa Corollaries for Nonsmooth Systems,''
{\it IEEE Trans.~on Automatic Control}, 
Vol.~58, No.~9, pp.~2333--2338, 2013.
%


\end{thebibliography}
\end{document}